\newtheorem{prop}{Proposition}
\newtheorem{theorem}{Theorem}
\newtheorem{lemma}{Lemma}
\title{The Maximum Number of 3- and 4-Cliques within a Planar Maximally Filtered Graph}
\author[1]{Jenna Birch \thanks{j.l.birch@student.liv.ac.uk}}
\author[1,2] {Athanasios A. Pantelous \thanks{aap@liv.ac.uk}}
\author[2] {Konstantin Zuev \thanks{zuev@liv.ac.uk}}
\affil[1] {Institute for Financial and Actuarial Mathematics, Department of Mathematical Sciences, University of Liverpool, United Kingdom.}
\affil[2]{Institute for Risk and Uncertainty, University of Liverpool, United Kingdom.}
\date{February 2014}
\begin{document}

\maketitle
\textbf{Keywords:} Planar Maximally Filtered Graph, Correlation-based Networks, 3- and 4-cliques, Eberhard's operation, Standard spherical triangulation.
\section*{Abstract}
\
Planar Maximally Filtered Graphs (PMFG) are an important tool for filtering the most relevant information from correlation based networks such as stock market networks. One of the main characteristics of a PMFG is the number of its 3- and 4-cliques. Recently in a few high impact papers it was stated that, based on heuristic evidence, the maximum number of 3- and 4-cliques that can exist in a PMFG with $n$ vertices is $3n - 8$ and $n - 4$ respectively. In this paper, we prove that this is indeed the case.

\section{Introduction}

In recent years there has been increasing interest in how we can model complex systems using network theory. These can include information, technological and biological systems \cite{Newman03}, social networks \cite{Toivonen06} and financial markets \cite{Allen00}, \cite{Bonanno04}. In particular, a network based approach of studying complex systems has become very popular in econophysics \cite{Stanley99}, an interdisciplinary research field that studies economic and financial phenomena. One of the important and fundamental problems in this approach is to filter the most relevant information from financial networks. As a result traditional algorithms from network theory have been adapted and some new methods have been introduced. In 1999 Mantegna \cite{Mantegna99} introduced a method for finding a hierarchical arrangement for a portfolio of stocks by extracting the Minimum Spanning Tree (MST) from the complete network of correlations of daily closing price returns for US stocks. This work has developed to include Asset Graphs (AG) as introduced in Refs. \cite{Onnela02}, \cite{Onnela03} and Threshold Networks \cite{Tse10}, \cite{Qiu10}. Building from the work by Mantegna \cite{Mantegna99} with the MST, one of the most recent developments was an algorithm proposed by Tumminello \textit{et al.} \cite{Tumminello05} where the complete network can be filtered at a chosen level, by varying the genus of the resulting filtered graph. So if a graph is embedded on a surface with genus = $g$, as $g$ increases the resulting graph becomes more complex and so reveals more information about the clusters formed, while keeping the same hierarchical tree as the corresponding MST. The simplest form of this graph is the Planar Maximally Filtered Graph (PMFG), on surface $g = 0$. \\

The PMFG is constructed in a similar way to the MST. Specifically, for a graph with $n$ vertices, a weighted edge is associated with all paired vertices where the value of the edge is the similarity coefficient between the two vertices. The weighted edges $u_{1}, u_{2}, ..., u_{e}$  are placed in descending order $u_{(1)}, u_{(2)}, ..., u_{(e)}$. The first edge $u_{(1)}$ is selected and a graph is constructed with $u_{(1)}$ and the two vertices that it connects. The ordered edges continue to be selected and added to the network structure only if the resulting network can be embedded into a plane i.e. can be drawn on a planar surface without edges crossing. (There are some tests for planarity based on Kuratowski's theorem. For more details on these refer to Hopcroft and Tarjan \cite{Hopcroft74}). The algorithm ends when all vertices $v_{1}, v_{2}, ..., v_{n}$ are connected, using $3(n-2)$ edges.\\

Mantegna \cite{Mantegna99} illustrates how the MST can extract the hierarchical arrangement between the vertices of a correlation network (in particular between stocks). Similar to this, one of the key properties of the AG, threshold networks and PMFG is that cliques can form between the vertices in the network which can highlight relationships. Huang \textit{et al.} \cite{Huang09}creates threshold networks to analyse the Chinese stock market using a correlation threshold value $-1 \leq \theta \leq 1$ where $\theta$  is the correlation coefficient between two stocks. They study the relationship between the maximum clique, maximum independent set (a subset $I \subseteq V$ such that the subgraph $G(I)$ has no edges) and the threshold value $\theta$.  Huang \textit{et al.} \cite{Huang09} state that \textit{`the financial interpretation of the clique in the stock correlation network is that it defines the set of stocks whose price fluctuations exhibit a similar behaviour.' }\\

The PMFG has proven to be an important tool for filtering the most relevant information from a network, particularly in correlation based networks that model the correlation between stock prices. For example, Pozzi \textit{et al.} \cite{Pozzi13} consider the level of risk and the returns on portfolio’s selected using filtered graphs, including PMFG. Erygit \textit{et al.} \cite{Eryigit09} use PMFGs (along with MSTs and clustering methods) to analyse the daily and weekly return correlations among indices from stock exchange markets of 59 countries. In general, the PMFG can tell us about the clusters (as mentioned above) that form within the dataset, regardless of the network nature, as a result of the underlying topological properties of the network. Song \textit{et al.} \cite{Song12} introduce a technique to extract the cluster structure and detect the hierarchical organisation within a complex dataset. This method has been developed using the topological structure of the PMFG such as the separating 3-cliques which separate a graph into two disconnected parts. Aste \textit{et al.} \cite{Aste05a} discuss the benefits of studying networks in terms of their surface embeddings. By considering the topology of the PMFG we can see that the basic structure (or motif) of the PMFG is a series of 3-cliques. Consider a sphere, a surface with $g=0.$ The PMFG separates the sphere into a sequence of triangular faces, with each vertex of the network belonging to a 3-clique. For a set of vertices there are various representations that this underlying series of 3-cliques can form (see Section 2.2). A set of three 3-cliques joined by the shared edges of a fourth 3-clique will form a 4-clique between a group of four vertices. Aste \textit{et al.} \cite{Aste05b} discusses that there must be strong relations between the properties of these 4-cliques and the ones of the system from which the cliques have been generated. For the PMFG we consider the vertices that form the 3- and 4-cliques (as the maximum number of elements that can form a clique is 4). Tumminello \textit{et al} \cite{Tumminello05} state \textit{`...normalizing quantities are $n_{s} - 3$ for 4-cliques and $3n_{s} - 8$ for 3-cliques. Although we lack a formal proof, our investigations suggest that these numbers are the maximal number of 4-cliques and 3-cliques, respectively, that can be observed in a PMFG of $n_{s}$ elements.'} As well as looking at the average correlations within the cliques and whether the cliques are from one sector or cross-sector we also consider the ratio between the number of cliques that have formed to the maximum number of cliques that could form. For this, Tumminello \cite{Tumminello05} used the normalizing quantities that have been mentioned above, an approach that has also been used by Refs. \cite{Eryigit09}, \cite{Aste05b}, \cite{Tumminello07} and used when defining the connection strength of a sector in Ref. \cite{Coronnello05}. This paper provides the missing formal proof that $3n - 8$ and $n - 4$ are indeed the maximum numbers of 3-cliques and 4-cliques possible in a PMFG.\\

This paper is organised as follows: Section 2 provides relational definitions and notations including diagonal flips (2.1) and surface triangles and separating 3-cycles (2.2). Section 3 discusses various representations of each maximal planar graph, including standard spherical triangulation form (3.1). Our main results are presented in Section 4. Section 5 concludes and gives future direction for our research.

\section{Relational Definitions and Notations}

Here we introduce some key terminology that we use throughout the paper. Consider a graph $G(V,E)$ where $V$ is the set of vertices belonging to $G$ and $E$ is the set of edges belonging to $G$. We denote the number of vertices $|V|=n$ and the number of edges $|E|=e$.  \\

Let $G$ be a \textit{planar graph}, i.e. a graph that can be embedded in the plane in such a way that the edges of $G$ will only intersect at the end points (the vertices of $G$). The planar graph divides the plane into \textit{faces}, with each face bound by a simple cycle of $G$. The number of edges in this boundary is the \textit{degree} of the face. The \textit{planar representations} of $G$ are all possible isomorphic embeddings of $G$ in the plane. \\

A \textit{triangulation} of a closed surface is a simple graph, one that does not contain self- or multiple-edges, which is embedded into the surface so that each face is a triangle and that two faces meet along at most one edge. A planar graph is \textit{maximal} if it is triangulated because if a face has more than 3 edges we can add a diagonal edge. A PMFG is a triangulation of a sphere. Within this paper we shall denote $P_{n}$ as a \textit{maximal planar graph} with $n$ vertices. \\

A \textit{chord} is an edge connecting two vertices of a cycle, which is not included in the cycle itself. For a graph $P_{n}$, a cycle of length $k$ $(k \geq 3)$ is called a $k$-cycle, denoted as $\mathcal{C}_k$. A cycle $\mathcal{C}$ is a \textit{pure chord-cycle} if the interior of $\mathcal{C}$ contains no vertices and all the interior faces of $\mathcal{C}$ are triangles. If each of the cycles of 4 or more vertices within a graph has a chord, then the graph is called a \textit{chordal graph}. A \textit{wheel graph}, denoted as $W_{n}$, is a graph with $n \geq 4$ formed by connecting a single vertex to all other vertices of an $(n-1)$-cycle. 

\subsection{Diagonal Flips}

Consider two triangular faces which share a common edge and form a quadrilateral, (see Figure 1). Negami \cite{Negami94} defines a \textit{diagonal flip} of an edge as replacing the existing common edge with a new edge between the other two vertices. A diagonal flip is only possible if the resulting quadrilateral does not contain any multiple edges. \\

\begin{figure}
\begin{center}
\includegraphics[width=.8\textwidth]{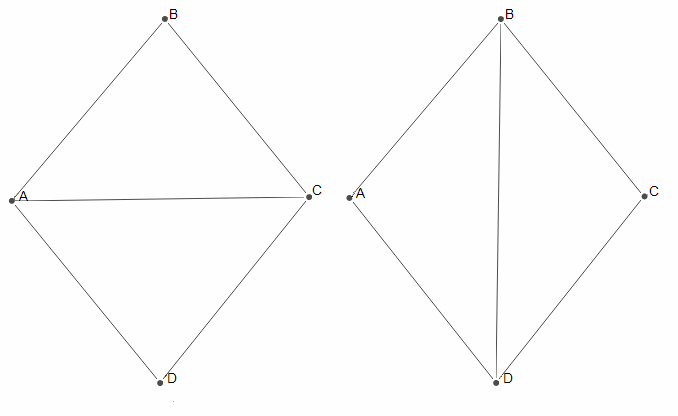} 
\caption{A quadrilateral ABCD is formed by the two adjoining triangles ABC and ACD which share a common edge (A,C). If we perform a diagonal flip the edge (A,C) is replaced by the edge (B,D).}
\end{center}
\end{figure}

\subsection{Surface Triangles and Separating 3-Cycles}

For a graph $G(V,E)$ a subset of vertices $C \subseteq V$ is called a \textit{clique} if the subgraph $G(C)$ is a complete graph and is denoted as $C_{m}$ where $|C| = m.$ As a result of Kuratowski's Theorem  we know that the PMFG allows cliques up to a maximum size of 4 vertices and so in this paper we will consider cliques of sizes 3 and 4. The 3-cliques can take the form of triangles on the surface (a pure chord-cycle of length 3 that forms a face of the PMFG) or a \textit{separating 3-cycle} (a 3-cycle where both the interior and exterior of $C_{3}$ contain one or more vertices), (see Figure 2).\\

\begin{figure}
\begin{center}
\includegraphics[width=.4\textwidth]{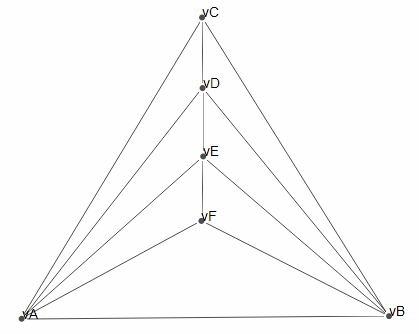}  
\caption{This PMFG with 6 vertices highlights the two possible 3-cliques. Vertices A,C,D form a 3-clique and they outline a triangle on the surface. Vertices A,B,E also form a 3-clique however they do not outline a surface triangle but rather the edges enclose 3 surface triangles which share common edges. A,B,E forms a separating 3-cycle.}
\end{center}
\end{figure}

The faces bounded by a cycle of edges are called \textit{finite faces} whereas the unbounded face (ABC in Figure 2) is called the \textit{infinite face}. As the PMFG is a triangulation of the sphere this unbounded infinite face will also form a triangle. \\

In Section 4 we will study the maximum number of 3-cliques possible in a PMFG, however we begin by studying the maximum number of triangles on the plane (i.e. the maximum number of faces). To do this we use the Handshaking Lemma and Euler's formula (refer to Ref. \cite{Wilson85} for further details). For the remaining of this section let $G(V,E)$ be a simple, undirected, finite planar graph.\\

\begin{prop} 
Let $G$ be a PMFG with $n$ vertices, $e$ edges and $f$ faces. Then $e=3n-6$ and $f=2n-4$.
\end{prop}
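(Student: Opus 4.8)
The plan is to derive both formulas from Euler's formula for connected planar graphs, $n - e + f = 2$, combined with a double-counting (Handshaking-type) argument applied to the face--edge incidences. This is the standard route for triangulations of the sphere, and the only subtlety is to justify carefully that every face — including the infinite face — is bounded by exactly three edges, and that each edge lies on the boundary of exactly two distinct faces.

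First I would recall that a PMFG on $n \geq 3$ vertices is, by definition (Section 2), a triangulation of the sphere: it is simple, connected, maximal planar, and every face is a triangle. In particular, as noted after Figure 2, the infinite face is also a triangle, so there are no exceptional faces to worry about. Next I would set up the incidence count: let $N$ be the number of incident face--edge pairs $(F, \epsilon)$ with $\epsilon$ on the boundary of $F$. Counting by faces, each of the $f$ faces contributes exactly $3$ (each face is a triangle), so $N = 3f$. Counting by edges, each edge of a simple triangulation of the sphere borders exactly two distinct faces, so $N = 2e$. This gives the relation $3f = 2e$.

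Then I would substitute $f = \tfrac{2}{3}e$ into Euler's formula $n - e + f = 2$, obtaining $n - e + \tfrac{2}{3}e = 2$, i.e. $n - \tfrac{1}{3}e = 2$, hence $e = 3n - 6$. Back-substituting gives $f = \tfrac{2}{3}(3n-6) = 2n - 4$, as claimed. I would also remark that this matches the count of $3(n-2)$ edges mentioned in the Introduction for the PMFG construction algorithm, which is a useful sanity check.

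The main obstacle — really the only point requiring care rather than computation — is the claim that each edge borders exactly two distinct faces (equivalently, that no edge is incident to the same face on both sides). For a 2-connected plane graph this is automatic, and a simple triangulation of the sphere on at least $4$ vertices is 2-connected (indeed 3-connected), so I would either invoke that or argue directly: if some edge $\epsilon = (u,v)$ had the same face on both sides, that face would have to ``wrap around'' $\epsilon$, forcing a non-triangular or degenerate boundary, contradicting that every face is a triangle in a simple graph with no multiple edges. For the degenerate small cases ($n = 3$, the single triangle, which has two triangular faces and $e = 3$, $f = 2$) the formulas hold by direct inspection, so I would handle $n = 3$ separately and assume $n \geq 4$ for the connectivity argument. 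Everything else is routine arithmetic.
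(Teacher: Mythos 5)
Your proof follows essentially the same route as the paper's: the face--edge incidence count $\sum_i \deg(f_i) = 2e$ together with $\deg(f_i)=3$ gives $3f=2e$, which is then substituted into Euler's formula to obtain $e=3n-6$ and $f=2n-4$. Your additional care in justifying that every edge borders two distinct faces (via 2-connectedness) and in handling the infinite face and the $n=3$ case is a welcome tightening of details the paper leaves implicit, but the argument is the same.
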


\begin{proof} 
Since $G$ is planar and $deg(v_{i}) \geq 2$, \\
\[\sum^{f}_{i=1} deg(f_i) = 2e.\] \\
Since $G$ is a triangulation, $deg(f_i) = 3 \Rightarrow 3f = 2e$.\\
We can substitute into Euler's formula and obtain the following,  \\

\[n - \frac{3}{2} f + f = 2\]
\begin{equation}
f = 2n - 4
\end{equation}

and similarly,\\
\[n - e + \frac{2}{3} e = 2\]
\begin{equation}
e = 3n - 6
\end{equation}
\end{proof}

So for a PMFG, the triangulation of a sphere, we have a maximum number of $2n-4$ surface triangles.

\section{ Representations of Each Maximal Planar Graph}

In this section the various representations of planar graphs are presented, which are used to achieve the main results shown in Section 4.\\

Using the relationship between the number of edges and the sum of the vertex degree we can calculate the maximum sum of all vertex degrees for a PMFG with $n$ vertices. By considering all combinations of the possible degrees of each vertex we can see what embeddings would be possible and, from these, which would be planar graphs. The following worked example shows this more clearly.\\

\textbf{Example 1.} Take $G(V,E)$ where $|V|=8$ then we know:\\
$e = 3n-6 = 18$ and so $\sum^8_{i=1} deg(v_i )=2e=36.$\\

Then each vertex can have a degree value from the set \{3, 4, 5, 6, 7\} due to the restrictions that each vertex can be joined to all other vertices at most once (the PMFG does not allow for multiple edges) and the degree of each vertex must be at least 3. From this set there are 53 possible combinations that would give the total degree sum of 36 and from these combinations 13 would produce a planar graph. Graphs that have the same combination of vertex degrees and are isomorphic to each other are known as \textit{planar representations} and they will have the same number of 3-cliques (denoted $C_3$) and 4-cliques (denoted $C_4$), (see Figure 3). \\

\begin{figure}[H]
\begin{center}
\includegraphics[width=.8\textwidth]{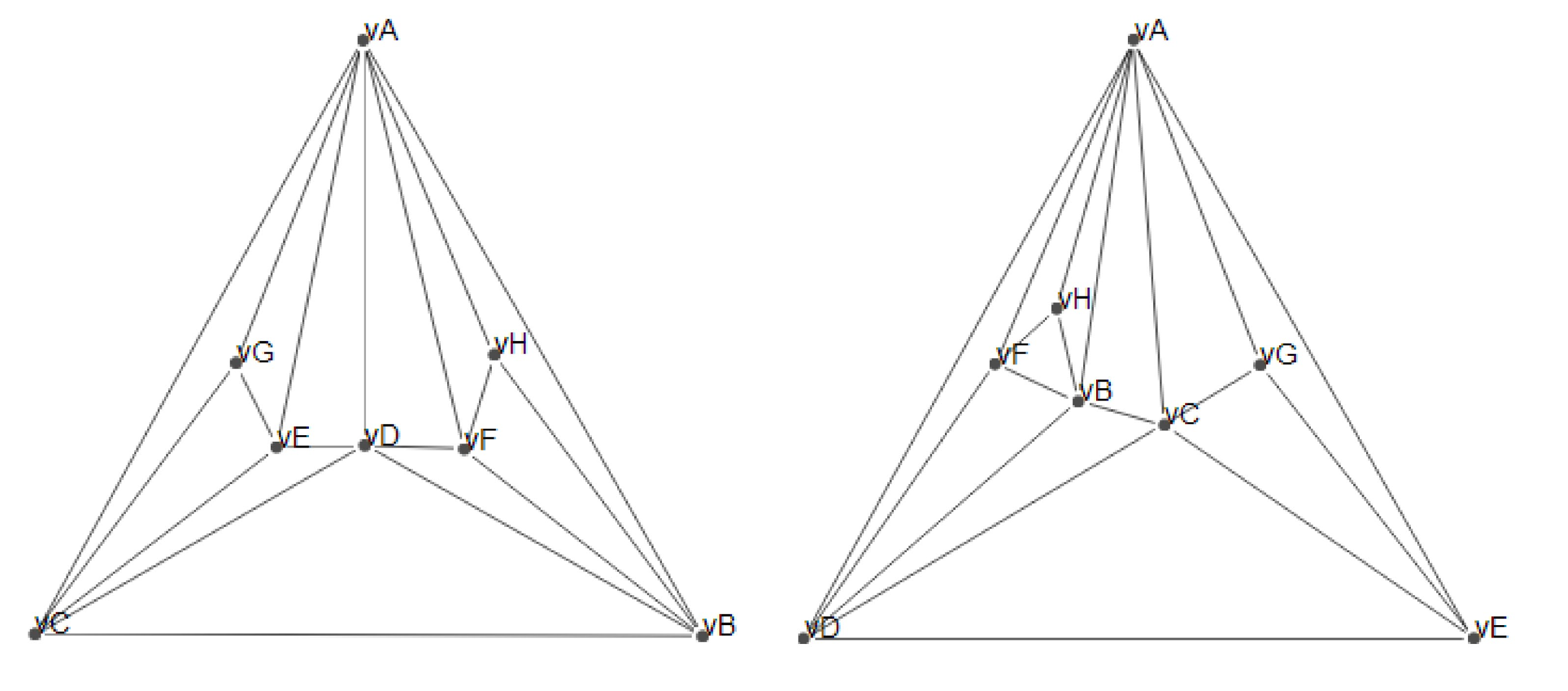} 
\caption{PMFGs with $n=8$ and $deg[v_{A}, v_{B}, v_{C}, ..., v_{H}]$=[7, 5, 5, 5, 4, 4, 3, 3]. These graphs are isomorphic and have $C_3=16$ and $C_4=5$.}
\end{center}
\end{figure} 

It is possible however to have graph structures with the same combination of vertex degrees that are not isomorphic, (see Figure 4).

\begin{figure}
\begin{center}
\includegraphics[width=.8\textwidth]{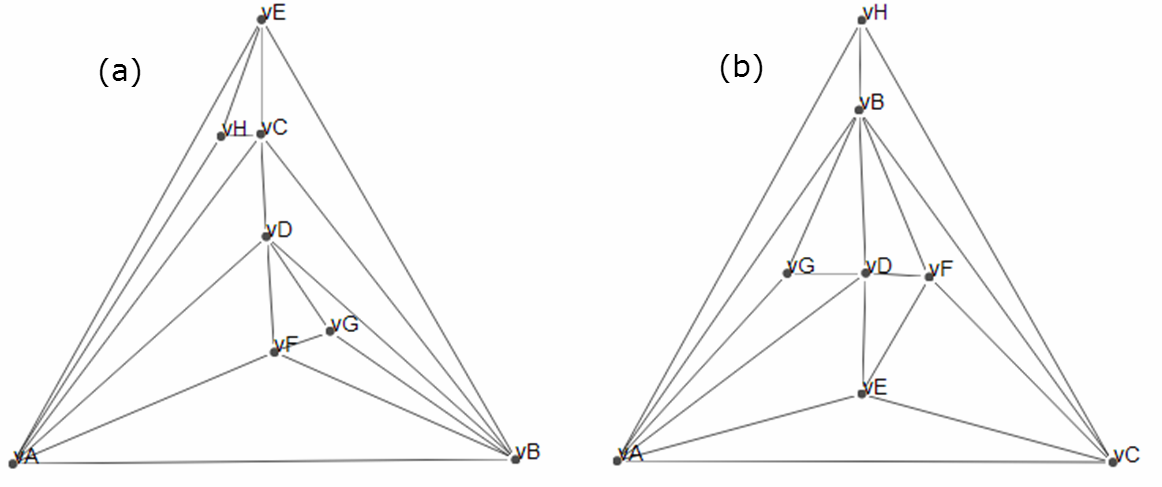} 
\caption{Both graphs have $deg[v_{A}, v_{B}, v_{C}, ..., v_{H}]$=[6, 6, 5, 5, 4, 4, 3, 3] however they are not isomorphic and so are not planar representations of a single graph. Furthermore, they have different numbers of $C_3$ and $C_4$ with the graph shown in Panel (a) having $C_3=16$, $C_4=5$ and the graph shown in Panel (b) having $C_3=14$, $C_4=2$.}
\end{center}
\end{figure} 

\subsection{Standard Spherical Triangulation Form}

We now consider how these different embeddings relate to each other using the idea of diagonal flips, as introduced by Negami \cite{Negami94}.\\

In 1936 Wagner \cite{Wagner36} proved that any two triangulations of the sphere can be transformed into each other by a finite series of diagonal flips (see also Ref. \cite{Bose12}). Although this does not hold for surfaces in general it has been shown to be true for triangulations of the torus, projective plane and Klein bottle. Negami \cite{Negami94} generalized Wagner's result: \\

\begin{theorem} For any closed surface $F^{2}$, there exists a positive integer $N$ such that two triangulations $G$ and $G'$ of $F^{2}$ are equivalent to each other under diagonal flips if $|V(G)| = |V(G')| \geq N$. 
\end{theorem}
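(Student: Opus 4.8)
\emph{Proof strategy.} This theorem generalizes Wagner's result \cite{Wagner36} for the sphere, and the natural plan is to imitate Wagner's argument, replacing the single ``canonical'' spherical triangulation by the finite family of \emph{minimal} (irreducible) triangulations of $F^{2}$. Call an edge of a triangulation of $F^{2}$ \emph{contractible} if contracting it again yields a simple triangulation of $F^{2}$ (no loop or multiple edge is created), and call a triangulation \emph{irreducible} if it has no contractible edge. Repeatedly contracting contractible edges in an arbitrary triangulation $G$ reaches, after finitely many steps (each removing a vertex), an irreducible triangulation $I(G)$. The deep structural input is that every closed surface admits only finitely many irreducible triangulations up to isomorphism, equivalently that their number of vertices is bounded by some $M=M(F^{2})$ (the theorem of Barnette and Edelson); fix representatives $I_{1},\dots,I_{r}$ of these.

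\textbf{Simulating vertex-count changes by flips.} A diagonal flip preserves the number of vertices, whereas contraction and its inverse (vertex splitting) do not, so these operations must be emulated by flips on larger triangulations. The elementary move is the insertion, inside a face $uvw$, of a new degree-$3$ vertex adjacent to $u,v,w$. I would show that (a) a single flip slides such a degree-$3$ vertex into an adjacent face, so that for a fixed triangulation $T$ all one-vertex insertions into $T$ are flip-equivalent, and (b) if $T$ and $T'$ are flip-equivalent with the same vertex count, then any one-vertex insertion into $T$ is flip-equivalent to any one-vertex insertion into $T'$. Iterating, any ``inflation'' of $T$ to $n$ vertices by successive degree-$3$ insertions is flip-equivalent to any inflation of any $T'\sim T$ to $n$ vertices. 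Throughout one must cope with the nuisance that a flip is forbidden when it would create a multiple edge, which is circumvented by preparatory flips that create local room, and this is where a large vertex count is repeatedly needed.

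\textbf{Assembly.} Given $n$-vertex triangulations $G$ and $G'$ with $n$ large, the above shows $G$ is flip-equivalent to an $n$-vertex inflation of some $I_{i}$ and $G'$ to an $n$-vertex inflation of some $I_{j}$. It then remains to link inflations of distinct irreducibles: one sets $I_{i}\approx I_{j}$ when some one-vertex inflation of $I_{i}$ contracts along a different edge to $I_{j}$, takes the transitive closure, and argues that this relation has a single class, so that all sufficiently large inflations of all of $I_{1},\dots,I_{r}$ lie in one flip-equivalence class. Choosing $N$ to be $M(F^{2})$ plus enough slack to run every flip-simulation above without the vertex count dropping below $M(F^{2})$ then yields the theorem.

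\textbf{Main obstacle.} The crux is the finiteness of irreducible triangulations together with the gluing step: proving that a fixed surface has only finitely many irreducible triangulations, with an explicit vertex bound, is itself substantial, and even granting it one must verify that inflating two genuinely different irreducibles to a common large vertex count produces flip-equivalent triangulations, and then distil from these finitely many case analyses a single $N$ depending only on $F^{2}$. The sphere is comparatively easy precisely because $K_{4}$ is its unique irreducible triangulation, making the gluing step vacuous and recovering Wagner's theorem with a concrete $N$.
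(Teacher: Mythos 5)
A point of context first: the paper does not prove this theorem at all --- it is Negami's result, quoted from \cite{Negami94} as background for the standard-form lemma that follows it, so there is no in-paper proof to compare yours against. Your outline does faithfully reproduce the architecture of Negami's actual argument: contract contractible edges down to an irreducible triangulation, invoke the finiteness of irreducible triangulations of a fixed closed surface $F^{2}$, simulate vertex splittings by degree-$3$ insertions moved around by flips, and finally link inflations of distinct irreducibles. As a description of where the proof lives, it is accurate, and your closing remark about the sphere ($K_{4}$ being its unique irreducible triangulation, which is why Wagner's case is easy and why the present paper can take $N=4$) is exactly right.

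That said, what you have written is a roadmap rather than a proof, and the two places you flag as ``obstacles'' are precisely where all of the mathematical content sits. First, the finiteness of irreducible triangulations (Barnette--Edelson) is a deep external theorem; citing it is legitimate, but then your argument is a reduction to that theorem plus further work, not a proof. Second, the gluing step is asserted, not argued: you write that one ``argues that this relation has a single class,'' but that single-class claim is essentially equivalent to the theorem itself for large $n$, and nothing in your sketch indicates why it holds --- if the irreducibles split into two classes whose large inflations were never flip-equivalent, the theorem would be false. Third, the claim in your steps (a)--(b) that ``preparatory flips create local room'' to avoid forbidden flips (those creating multiple edges) needs a genuine argument on surfaces of positive genus, where short noncontractible cycles can obstruct naive flips; controlling this is exactly where the dependence of $N$ on $F^{2}$ enters and where the quantitative content of the theorem is generated. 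So: right skeleton, but every load-bearing step is still missing.
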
 

In the case of the PMFG, the triangulation of the sphere, $N=4$. \\

\begin{lemma} Any maximal graph with $n$ vertices, $n \geq 4$, can be transformed to the \textit{standard spherical triangulation} (or normal form), (see Figure 5), using a series of diagonal flips. \end{lemma}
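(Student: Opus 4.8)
The plan is to induct on the number of vertices $n$, using Wagner's theorem (the $N=4$ case of Theorem 1, already invoked in the excerpt) as the structural engine. First I would fix, once and for all, a canonical target: define the \emph{standard spherical triangulation} $S_n$ on $n$ vertices to be the graph obtained by taking a triangle $v_1 v_2 v_3$ and repeatedly "stacking" a new vertex inside a face (equivalently the graph with one vertex of degree $n-1$, one of degree $n-1$ or the obvious wheel-like configuration shown in Figure 5 — whatever the authors' Figure 5 depicts, say a vertex joined to all others along a path plus the two triangle-closing edges). The base case $n=3$ is the single triangle (trivially in normal form), and $n=4$ is $K_4$, the unique maximal planar graph on $4$ vertices, which is by definition $S_4$.

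For the inductive step I would argue in two stages. Stage one: show that \emph{any} maximal planar graph $P_n$ with $n \geq 5$ admits a diagonal flip (or a short sequence of flips) after which it contains a vertex of degree $3$. This is the real content: since $\sum_i \deg(v_i) = 2e = 6n - 12 < 6n$, the average degree is below $6$, so there is a vertex $v$ of degree $\leq 5$; if $\deg(v) = 3$ we are done, and if $\deg(v) \in \{4,5\}$ one performs flips on edges incident to $v$ to lower its degree, checking at each step that the flip is legal (the quadrilateral formed has no pre-existing diagonal — this is exactly the multi-edge condition in Section 2.1). Stage two: once $P_n$ has a degree-$3$ vertex $v$, delete $v$; the result is a maximal planar graph $P_{n-1}$, which by the induction hypothesis can be flipped to $S_{n-1}$. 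Re-inserting $v$ into the appropriate face of $S_{n-1}$ and observing that the insertion face can itself be moved to the "correct" position in $S_{n-1}$ by flips (again using Wagner on the $(n-1)$-vertex graph, or directly by flips that rotate the stacked vertex), we obtain $S_n$. Concatenating all these flip sequences transforms $P_n$ to $S_n$.

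Alternatively — and this is probably the cleanest write-up — one can bypass the explicit degree-reduction argument entirely: Theorem 1 with $N=4$ already asserts that \emph{all} triangulations of the sphere on the same number of vertices $\geq 4$ are mutually equivalent under diagonal flips. Since the standard spherical triangulation $S_n$ is one particular such triangulation, every $P_n$ is flip-equivalent to $S_n$, and the lemma is immediate. In that case the only thing to verify is that $S_n$ is genuinely a simple triangulation of the sphere on $n$ vertices (so that Theorem 1 applies to it), which is clear from its construction by repeated face-stacking.

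The main obstacle is legality of the flips in Stage one of the first approach: a diagonal flip is forbidden when the two endpoints of the new diagonal are already adjacent, and when one is trying to drive down the degree of a low-degree vertex $v$ this obstruction can genuinely occur (e.g.\ when a neighbour of $v$ is adjacent to all other neighbours of $v$). One handles this by a case analysis on $\deg(v) \in \{4,5\}$ and the local structure of the link of $v$, showing that in every configuration \emph{some} incident edge is flippable in a direction that helps; for $\deg(v)=4$ this is a one-line check, and for $\deg(v)=5$ it reduces to a small finite number of local pictures. If one instead takes the second approach and simply quotes Theorem 1, there is no real obstacle — but then one should remark (as the excerpt does) that $N=4$ for the sphere is due to Wagner, so that the lemma is not circular.
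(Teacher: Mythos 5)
The paper does not actually prove this lemma: it states it and refers the reader to Ore's monograph, so there is no in-paper argument to compare against. Of your two routes, the second is the one that fits the paper's framework most economically: Theorem~1 with $N=4$ (Wagner's theorem) asserts that all sphere triangulations on $n \geq 4$ vertices are mutually flip-equivalent, and the standard spherical triangulation of Figure~5 (degree sequence $[n-1,n-1,4,\dots,4,3,3]$) is one particular simple triangulation on $n$ vertices, so the lemma follows in one line. The caveat you raise yourself deserves to be stated more strongly, though: Wagner's 1936 theorem is classically \emph{proved} by establishing precisely this normal-form lemma (every triangulation flips to the standard form, and flips are reversible, so any two triangulations are connected through it). Quoting Wagner to get the lemma is therefore logically admissible only if his theorem is treated as an external black box; as a self-contained development it is circular, which is presumably why the paper defers to Ore, whose argument is the direct one.

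Your first, direct route is in substance the standard proof, but as written it has a real gap at the re-insertion step. If $v$ has degree $3$ in $P_n$ and you delete it to obtain $P_{n-1}$, a flip sequence carrying $P_{n-1}$ to $S_{n-1}$ does \emph{not} lift verbatim to $P_n$: any edge of $P_{n-1}$ that bounds the face $F$ into which $v$ is stacked is, in $P_n$, bounded by a different pair of triangles, so ``the same flip'' performed in $P_n$ yields a different graph and may not even be legal. To close the induction you need an auxiliary lemma that a stacked degree-$3$ vertex can be migrated from a face to an adjacent face by a short flip sequence, so that the insertion face can be moved out of the way of (or carried along with) each flip of the $(n-1)$-vertex sequence; this is provable but is exactly the part your sketch waves at. Your Stage-one claims are sound: for $\deg(v)=4$ the two diagonals of the link cannot both already be edges (that would embed a $K_5$), so a legal degree-lowering flip always exists, and $\deg(v)=5$ reduces to a comparable finite check.
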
 (For proof refer to Ref. \cite{Ore67}). \\

From $n \geq 4$, the degrees of each vertex in the standard spherical triangulation are as follows: \\

$deg[v_{1}, v_{2}, v_{3}, v_{4}, ..., v_{i}, ..., v_{n-1}, v_{n}] = [n-1, n-1, 4, ..., 4, ..., 3, 3] $

\begin{figure}[ht]
\begin{center}
\includegraphics[width=.4\textwidth]{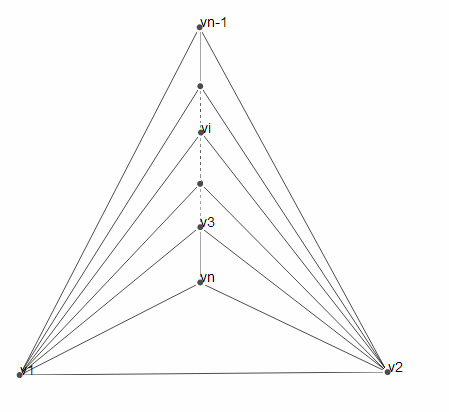}  
\caption{ A maximal graph with $n$ vertices in the standard spherical triangulation.}
\end{center}
\end{figure}

\section{Main Results}
\subsection{Generating Maximal Planar Graphs}

In 1891 Eberhard proposed a system in which a combination of a set of three operations could generate all possible maximally (filtered) planar graphs. We begin with the complete graph $K_4$ and then choose a generating operation, $\varphi_1, \varphi_2, \varphi_3$, from the operation set, $\Phi$. Each generating operation adds a new vertex to the graph. This system is denoted as $<K_4; \Phi = \{{\varphi_1, \varphi_2, \varphi_3 }\}>.$  \\

For $\varphi_1, \varphi_2, \varphi_3$ begin by deleting all of the chords of a pure chord-cycle $\mathcal{C}_k$ with length $k = (3, 4, 5)$ respectively. Then add a new vertex inside $\mathcal{C}$, which is connected to all vertices of $\mathcal{C}$ so that a wheel subgraph is created.\\

Figure 6 shows how a pure chord-cycle transforms under each Eberhard operation.

\begin{figure}[H]
\begin{center}
\includegraphics[width=.9\textwidth]{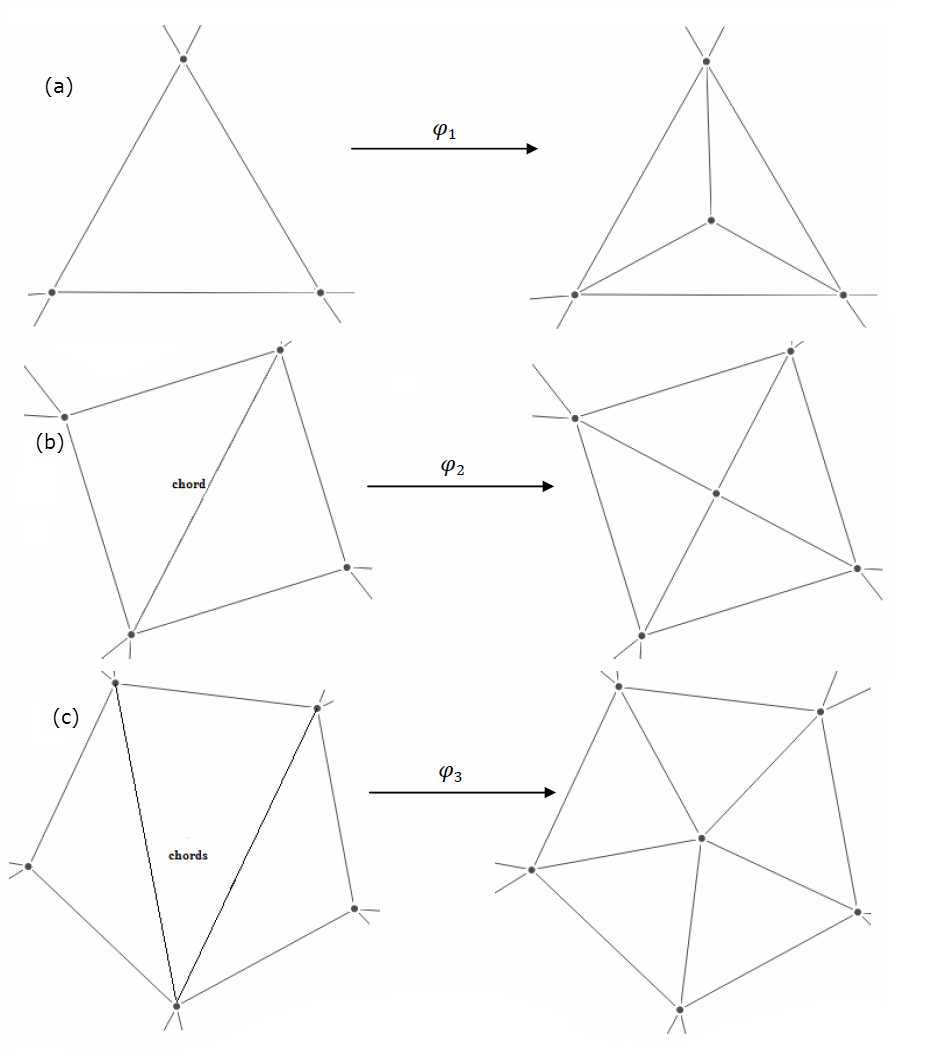}  
\caption{Panel (a) - The first Eberhard operation, $\varphi_1$, Panel (b) - The second Eberhard operation, $\varphi_2$, Panel (c) - The third Eberhard operation, $\varphi_3$}
\end{center}
\end{figure}

\pagebreak

\textbf{Example 2.} For $P_6$ two of the five possible representations are planar. \\

\begin{figure}[H]
\begin{center}
\includegraphics[width=.8\textwidth]{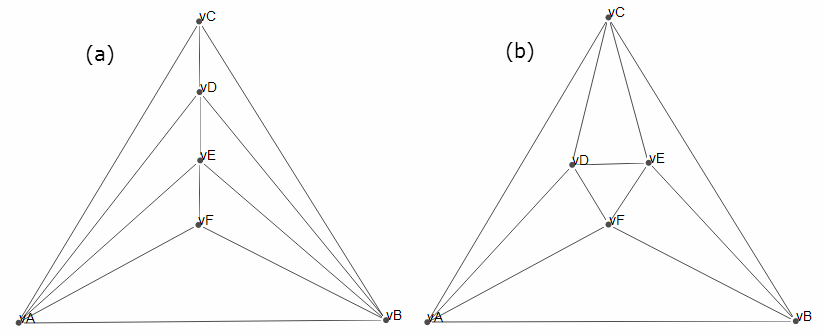}  
\caption{Panel (a) Standard spherical triangulation form, with $C_3= 10$, $C_4 = 3$ and Panel (b) the alternative form, with $C_3=8$, $C_4=0$.}
\end{center}
\end{figure}

Both of these graphs (shown in Figure 7) can be generated using a series of Eberhard's operations, starting with $K_4.$ \\

We first generate $P_5$ from the complete graph $K_4$. There is only one representation of $P_5$ that is planar. \\

\begin{figure}[H]
\begin{center}
\includegraphics[width=.8\textwidth]{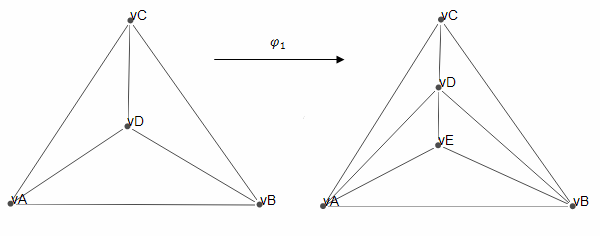}  
\caption{The transformation of $K_4$ to $P_5$ using Eberhard operation $ \varphi_1$.}
\end{center}
\end{figure}

 Then we can generate $P_6$ from $P_5$, (shown in Figure 8), by using any of the three operations from $\Phi$. In $P_5$ there are five pure chord-cycles of length 3 which are as follows: (A, E, C), (A, D, E), (C, D, E), (A, B, D) and (B, C, D). For $\varphi_1$ we add a vertex to the interior of one of the above pure chord-cycles which we join to the three edges of the cycle to create a wheel subgraph and a representation of $P_6$. Note that all $\mathcal{C}_3$ in $P_5$ will generate $P_6$ in the standard spherical triangulation form, (see Figure 9). \\

\begin{figure}[H]
\begin{center}
\includegraphics[width=.8\textwidth]{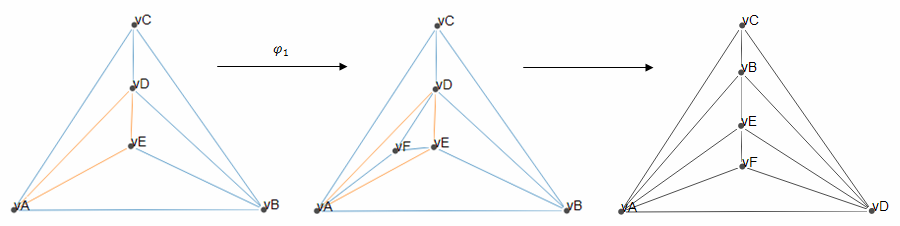}  
\caption{The transformation of $P_5$ to $P_6$ using Eberhard operation $ \varphi_1$.}
\end{center}
\end{figure}

In $P_5$ there are four pure chord-cycles of length 4, all with one chord edge, which are as follows: (A, C, B, D), (A, B, D, E), (A, C, D, E) and (B, C, D, E). For $\varphi_2$ we select one of these pure chord-cycles and start by deleting the chord edge. Then we add a vertex to the interior which we join to the four edges of the cycle to create a wheel subgraph and a representation of $P_6$. Using $\varphi_2$ can generate $P_6$ in both standard spherical triangulation form and the alternative form, depending on which pure chord-cycle is chosen, (see Figure 10).\\

\begin{figure}[H]
\begin{center}
\includegraphics[width=.7\textwidth]{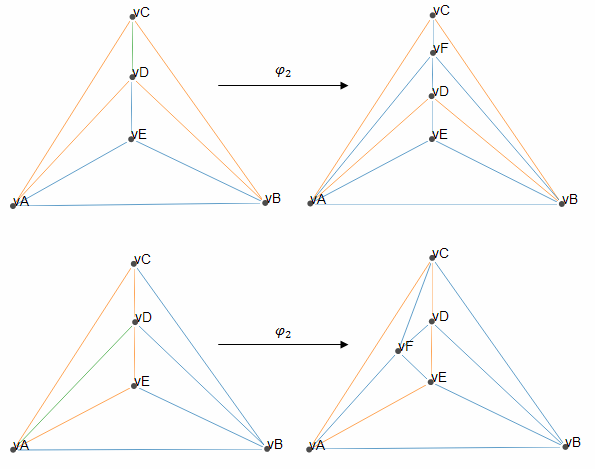}  
\caption{The transformation of $P_5$ to $P_6$ using Eberhard operation $ \varphi_2$.}
\end{center}
\end{figure}

The final option is to use $\varphi_3$ and the one pure chord-cycle of length 5 (A, C, B, D, E). This has two chord edges (A,D) and (C,D) which will be removed before adding a new vertex to the interior. This is then joined to each of the five vertices in the cycle to produce $P_6$ in standard spherical triangulation form,  (see Figure 11).\\

\begin{figure}[H]
\begin{center}
\includegraphics[width=.8\textwidth]{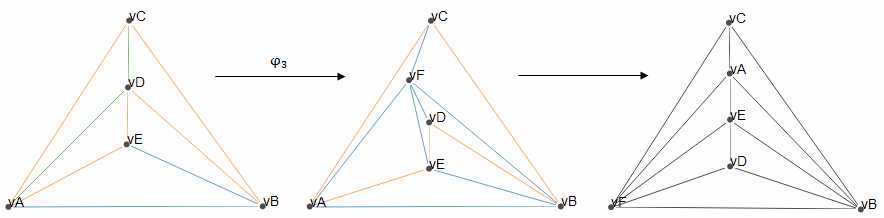}  
\caption{The transformation of $P_5$ to $P_6$ using Eberhard operation $ \varphi_3$.}
\end{center}
\end{figure}

\subsection{Maximum Number of 3- and 4- Cliques}
Using the Eberhard operations to generate maximal planar graphs we can find the maximum number of 3-cliques that will be added during each iteration of the construction algorithm and so consequently the maximum number of 3-cliques that is possible in $P_n$.\\

Now the maximum number of 3-cliques is shown using Theorem 2.

\begin{theorem} Let $P_n$ be a maximal planar graph with $n$ vertices, $n \geq 3$. Then the maximum number of 3-cliques, $C_3^{max}$, that are possible is $C_3^{max}(P_n)=3n-8$.
\end{theorem}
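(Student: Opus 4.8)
The plan is to proceed by induction on $n$ using the Eberhard construction $\langle K_4; \Phi\rangle$ from Section 4.1, since every maximal planar graph on $n$ vertices can be built from $K_4$ by a sequence of operations $\varphi_1,\varphi_2,\varphi_3$, each adding one vertex. For the base case I would check directly that $P_3 = K_3$ has one 3-clique, matching $3\cdot 3 - 8 = 1$, and perhaps also verify $P_4 = K_4$ ($C_3 = 4 = 3\cdot 4 - 8$) as a sanity check. For the inductive step, I would suppose the bound $C_3(P_{n-1}) \le 3(n-1) - 8 = 3n - 11$ holds and count how many \emph{new} 3-cliques each Eberhard operation can create when passing from $n-1$ to $n$ vertices.

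The heart of the argument is a local count: when we apply $\varphi_k$ ($k \in \{3,4,5\}$) we delete the $k-3$ chords of a pure chord-cycle $\mathcal{C}_k$, add a new vertex $v$ inside, and join $v$ to all $k$ vertices of $\mathcal{C}_k$. The new 3-cliques are exactly those containing $v$: these are the $k$ triangles $\{v, a_i, a_{i+1}\}$ formed with consecutive boundary vertices, plus any 3-clique $\{v, a_i, a_j\}$ where $a_i a_j$ remains an edge (a surviving chord or boundary edge). Deleting the $k-3$ chords also destroys some old 3-cliques. I would tabulate, for each of $\varphi_1$ ($k=3$), $\varphi_2$ ($k=4$), $\varphi_3$ ($k=5$), the net change $\Delta C_3$ in the number of 3-cliques, and show that in every case $\Delta C_3 \le 3$, with equality achievable (e.g. by $\varphi_1$, which adds exactly $3$ triangles and removes none, or by the appropriate chord configuration for $\varphi_2,\varphi_3$). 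Combined with the inductive hypothesis this gives $C_3(P_n) \le 3n - 11 + 3 = 3n - 8$.

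To show the bound $3n-8$ is actually attained (and not merely an upper bound), I would exhibit an explicit family: start from $K_4$ and repeatedly apply $\varphi_1$, inserting a vertex into a triangular face each time. Each such step adds exactly $3$ new 3-cliques and destroys none, so after reaching $n$ vertices we have $C_3 = 4 + 3(n-4) = 3n - 8$; in fact the standard spherical triangulation of Lemma~1 realizes this, consistent with Example 2 where $P_6$ in normal form has $C_3 = 10 = 3\cdot 6 - 8$.

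The main obstacle I anticipate is the careful bookkeeping in the $\varphi_2$ and $\varphi_3$ cases: one must argue that the old 3-cliques destroyed by removing the chords, together with the constraint that $\mathcal{C}_k$ was a \emph{pure} chord-cycle, genuinely force $\Delta C_3 \le 3$ — in particular ruling out configurations where the new vertex $v$ inherits too many edges to old boundary vertices while few old triangles are lost. A subtle point is that separating 3-cycles (not just facial triangles) must be counted as 3-cliques throughout, so the "new 3-cliques through $v$" must include separating cycles that $v$ may create; I would need to verify these do not push the count above $3$. Handling the interaction between deleted chords and these separating cycles cleanly is where the argument will require the most care.
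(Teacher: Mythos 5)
Your proposal follows essentially the same route as the paper: build $P_n$ from $K_4$ by Eberhard operations and bound the net gain of 3-cliques per operation by $3$, giving $C_3^{max}(P_n)\le 4+3(n-4)=3n-8$, with equality realized by repeated $\varphi_1$ (the standard spherical triangulation). Your version is in fact more careful than the paper's, which simply asserts the per-operation counts without addressing the 3-cliques destroyed by chord deletion or the separating 3-cycles created through the new vertex, so the extra bookkeeping you flag is exactly what would be needed to make the published argument fully rigorous.
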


\begin{proof}  With each Eberhard operation there is a new vertex added and also a certain number of 3-cliques - $\varphi_1$ adds three new 3-cliques whereas $\varphi_2$ and $\varphi_3$ both add two new 3-cliques.

Therefore the maximum number of 3-cliques that can be added is three and so we can say that: \\

\begin{equation}
C_3^{max}(P_n) \leq  C_3(K_4) + 3(n - 4),\\
\end{equation}

where $(n - 4)$ is the number of Eberhard operations required to construct $P_n$. As we know that the number of 3-cliques in $K_4$ is always 4 we can simplify (3) and obtain: \\

$C_3^{max}(P_n) \leq 4 + 3n - 12 = 3n - 8$ \\

So the maximum number of 3-cliques possible in $P_n, C_3^{max} = 3n - 8.$ \\
\end{proof}

 We can apply a similar argument to obtain the maximum number of 4-cliques in $P_n$. \\

\begin{theorem} Let $P_n$ be a maximal planar graph with $n$ vertices, $n \geq 3$. Then the maximum number of 4-cliques, $C_4^{max}$, that are possible is $C_4^{max}=n-3$. \\
\end{theorem}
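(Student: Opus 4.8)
The plan is to adapt the argument of Theorem 2 from 3-cliques to 4-cliques, again through Eberhard's construction $\langle K_4;\Phi\rangle$. Recall that every maximal planar graph $P_n$ with $n\geq 4$ is obtained from $K_4$ by a sequence of $n-4$ operations from $\Phi=\{\varphi_1,\varphi_2,\varphi_3\}$, each adding one new vertex, and that $K_4$ contains exactly one 4-clique, so $C_4(K_4)=1$. It therefore suffices to show that no Eberhard operation increases the number of 4-cliques by more than one. This gives
\[
C_4^{max}(P_n)\leq C_4(K_4)+1\cdot(n-4)=n-3,
\]
and the remaining case $n=3$ is immediate, since $P_3=K_3$ has no 4-clique and $n-3=0$.

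For $\varphi_1$ this is clear: inserting a degree-three vertex $v$ into a triangular face $abc$ deletes no edge, and the only 4-clique that can contain $v$ is $\{v,a,b,c\}$, since any 4-clique through $v$ consists of $v$ together with three pairwise-adjacent neighbours of $v$, and $v$ has precisely the three mutually adjacent neighbours $a,b,c$; hence $\varphi_1$ adds exactly one 4-clique. For $\varphi_2$ and $\varphi_3$ I would argue as follows. Let $v$ be the inserted vertex, joined to the vertex set $W$ of the pure chord-cycle $\mathcal{C}_k$ ($k=4,5$) whose $k-3$ interior chords are first deleted. After the deletion, every new 4-clique necessarily contains $v$ and corresponds to a triangle of the subgraph $H'$ induced on $W$ — this $H'$ is the $k$-cycle together with whatever edges of $G$ join two vertices of $W$ through the outer region of $\mathcal{C}_k$ — while no 4-clique avoiding $v$ is created. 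Hence the net change in $C_4$ equals $t(H')-D$, where $t(H')$ is the number of triangles of $H'$ and $D$ is the number of old 4-cliques that used a deleted chord. Because the interior of $\mathcal{C}_k$ is triangulated and vertex-free, the deleted chords form a fan and at most $k-3$ further chords can be drawn in the outer region; a short case analysis over the possible shapes of $H'$ then shows $t(H')\leq 3$ for $\varphi_3$ (and $t(H')\leq 2$ for $\varphi_2$), and that whenever $t(H')$ attains its maximum the extra chords force at least $t(H')-1$ of the old 4-cliques on $W$ to be destroyed, so $t(H')-D\leq 1$ in every case.

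It remains to see that the bound $n-3$ is attained. Starting from $K_4$ and applying $\varphi_1$ repeatedly — that is, stacking a new vertex inside a triangular face $n-4$ times — produces a maximal planar graph in which each step adds exactly one 4-clique, so the result has $1+(n-4)=n-3$ of them; the standard spherical triangulation is one such graph (compare Figure 7 for $n=6$). Hence $C_4^{max}(P_n)=n-3$.

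The main obstacle is the step for $\varphi_2$ and $\varphi_3$. Unlike the 3-clique count of Theorem 2, where each operation contributes a fixed number of new triangles, here deleting the chords of $\mathcal{C}_k$ can simultaneously create several new 4-cliques through $v$ — whenever chords of $\mathcal{C}_k$ run through its outer region — and destroy several old ones, so ``at most one new 4-clique'' is a genuine cancellation between the two effects; making it precise requires the case analysis over $H'$ sketched above rather than a one-line bound.
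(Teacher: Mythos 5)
Your proposal follows essentially the same route as the paper: generate $P_n$ from $K_4$ by $n-4$ Eberhard operations, show each operation raises the 4-clique count by at most one, and exhibit the bound as attained by repeated $\varphi_1$ (the standard spherical triangulation). The one substantive difference is that you are more careful than the paper where care is actually needed: the paper simply asserts that $\varphi_2$ and $\varphi_3$ ``add no new 4-cliques,'' whereas you correctly observe that these operations can create several new 4-cliques through the inserted vertex (whenever chords of $\mathcal{C}_k$ lie in its outer region) while simultaneously destroying old ones that used the deleted interior chords, so the real claim is a net bound $t(H')-D\leq 1$; your formulation is the right fix, though the ``short case analysis'' establishing it is only sketched, which leaves your argument at roughly the same level of completeness as the paper's own one-line assertion.
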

 
\begin{proof} With each Eberhard operation there is a new vertex added, however only $\varphi_1$ adds one new 4-clique, neither $\varphi_2$ and $\varphi_3$ add any new 4-cliques.  \\

Therefore the maximum number of 4-cliques that can be added is one and so we can say that: \\

\begin{equation}
C_4^{max}(P_n) \leq  C_4(K_4) + (n - 4), \\
\end{equation}

where $(n - 4)$ is the number of Eberhard operations required to construct $P_n$. As we know that the number of 4-cliques in $K_4$ is always 1 we can simplify (4) and obtain: \\

$C_4^{max}(P_n) \leq 1 + n - 4 = n - 3$ \\

So the maximum number of 4-cliques possible in $P_n, C_4^{max}= n - 3.$
\end{proof}

\subsection{3- and 4-Cliques in the Standard Spherical Triangulation Form}

As discussed in Section 3 there can be various representations of a graph and the number of 3-cliques that form between the vertices will depend on the structure of the graph. The minimum number of 3-cliques that will form in a PMFG with $n$ vertices = $2n-4$ as it will be equal to the number of surface triangles, including the vertices that form the infinite face (as shown in Proposition 1). From 4.2 we now have an expression for the maximum number of 3-cliques that can form in a PMFG. We now show that the standard spherical triangulation form always contains the maximum number of 3-cliques.\
When a maximal planar graph is in the standard spherical triangulation form we have two types of 3-cliques - firstly those formed by surface triangles and secondly those that enclose 3 surface triangles which share common edges (as shown in Figure 2) which form between a vertex and the two vertices with degree $n-1$. We call these two forms of triangles surface triangles and enclosing triangles respectively.

\begin{theorem} For a maximal graph with $n$ vertices in the standard spherical triangulation form the number of $C_3 = 3n - 8$ and the number of $C_4 = n - 3$.
\end{theorem}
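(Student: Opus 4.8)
The plan is to make the standard spherical triangulation completely explicit and then count its $3$- and $4$-cliques by a short case analysis on how many of its two high-degree vertices a given clique contains. Let $G$ be a maximal graph on $n$ vertices (with $n \ge 5$) in the standard spherical triangulation form, write $u,w$ for the two vertices of degree $n-1$ and $x_1,\dots,x_{n-2}$ for the remaining ones. Since $u$ and $w$ have degree $n-1$ they are adjacent to all other vertices, so $uw$, every $ux_i$ and every $wx_i$ is an edge; and deleting $u,w$ from this triangulation of the sphere leaves a connected planar graph in which the two former degree-$3$ vertices have degree $1$ and the $n-4$ former degree-$4$ vertices have degree $2$, hence a path. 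Relabelling so that $x_1 x_2 \cdots x_{n-2}$ is that path, the faces of $G$ are precisely the triangles $u x_i x_{i+1}$ and $w x_i x_{i+1}$ for $1 \le i \le n-3$ together with $u w x_1$ and $u w x_{n-2}$, which is the configuration of Figure 5. (Equivalently, $G$ arises from the standard form on $n-1$ vertices by a single Eberhard operation $\varphi_1$ applied to one of the two faces incident to both $u$ and $w$, as illustrated by Figures 7--11; this gives an alternative proof by induction, since by the proofs of Theorems 2 and 3 such a $\varphi_1$ creates exactly three $3$-cliques and exactly one $4$-clique and destroys neither any $3$-clique nor any $4$-clique, with base case $K_4$ having $C_3 = 4$, $C_4 = 1$.)

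I would then enumerate the $3$-cliques of $G$ according to how many of $u,w$ they contain. A triple $\{u,w,x_j\}$ is always a $3$-clique, giving $n-2$ of them; of these, $\{u,w,x_1\}$ and $\{u,w,x_{n-2}\}$ are surface triangles and the other $n-4$ are the enclosing triangles. A triple containing exactly one of $u,w$ is a $3$-clique if and only if its two other vertices are consecutive on the path, giving the $n-3$ surface triangles $u x_i x_{i+1}$ and, symmetrically, the $n-3$ surface triangles $w x_i x_{i+1}$. A triple meeting $\{u,w\}$ not at all would be a triangle within a path and so cannot occur. Adding these up yields $C_3(G) = (n-2) + 2(n-3) = 3n-8$, which is also $2n-4$ surface triangles (as in Proposition 1) plus $n-4$ enclosing triangles.

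The $4$-cliques follow from the same dichotomy: a $4$-clique meeting $\{u,w\}$ in at most one vertex would contain a triangle on path vertices, which is impossible, so every $4$-clique contains both $u$ and $w$ and its other two vertices form an edge of the path. Thus the $4$-cliques are exactly the $n-3$ sets $\{u,w,x_i,x_{i+1}\}$ with $1 \le i \le n-3$, giving $C_4(G) = n-3$. The degenerate small cases $n=3$ and $n=4$, where the path picture breaks down, are $K_3$ and $K_4$, for which $C_3 = 1 = 3\cdot3-8$, $C_4 = 0 = 3-3$ and $C_3 = 4 = 3\cdot4-8$, $C_4 = 1 = 4-3$.

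The only step requiring real care, as opposed to straightforward counting, is the identification of the standard form with the explicit path-plus-two-poles graph: a degree sequence alone need not determine a maximal planar graph (compare Figure 4), and here it does so only because two vertices of degree $n-1$ are forced to be universal, which rigidifies everything else. Once that identification is in place, the theorem reduces entirely to the casework above.
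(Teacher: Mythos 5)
Your proof is correct, and it reaches the same decomposition the paper uses (face triangles versus separating/enclosing triangles), but it arrives there by a genuinely more substantive route. The paper's proof is a single line: it takes the $2n-4$ surface triangles from Proposition 1, \emph{asserts} that there are $n-3$ enclosing triangles, subtracts $1$ for the unbounded face, and then disposes of the $4$-clique count entirely by remarking that the number of enclosing triangles equals the number of $4$-cliques --- neither the count $n-3$ nor the bijection with $4$-cliques is actually derived. You instead pin down the standard form explicitly as two universal poles $u,w$ plus a chordless path $x_1\cdots x_{n-2}$ (correctly flagging that this identification, not the counting, is the real content, and justifying it via the degree sequence and the connectivity of the graph after deleting $u,w$ --- which rests on the $3$-connectivity of triangulations, worth stating explicitly), and then enumerate cliques by how they meet $\{u,w\}$. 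This yields the $n-2$ triples $\{u,w,x_j\}$, the $2(n-3)$ face triangles $ux_ix_{i+1}$ and $wx_ix_{i+1}$, and the $n-3$ sets $\{u,w,x_i,x_{i+1}\}$ as the complete list of $4$-cliques, so both formulas are actually proved rather than read off. Your parenthetical induction via $\varphi_1$ is also sound and ties the theorem back to Theorems 2 and 3, showing the standard form attains the maxima. In short: same skeleton as the paper, but you supply the structural identification and the enumeration that the paper's proof omits, which is exactly where the work lies.
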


\begin{proof} Number of $C_3$ in the standard spherical triangulation form = (the number of surface triangles) + (the number of enclosing triangles) - (unbounded face) \\
$= (2n-4) + (n-3) - 1 = 3n - 8.$

Note that the number of enclosing triangles is equal to what we have shown to be the maximum number of $C_4$.

\end{proof}

\section{Conclusion}
In this paper we have discussed possible embeddings of $n$-vertex triangulations in the form of maximal planar graphs. We used the generating operations proposed by Eberhard to construct these maximal planar graphs and have proven that the maximum number of 3-cliques that can exist in a maximal planar graph with $n$ vertices is $3n-8$ and the maximum number of 4-cliques that can exist is $n-3$, where the number of vertices $n \geq 4$. This is true for when a maximal planar graph is constructed using the PMFG algorithm.  \\

Finally, we have shown how any maximal planar graph can be transformed to a standard spherical triangulation form retaining the original number of vertices and edges and that this structure will always contain the maximum number of 3- and 4- cliques. \\

The work on PMFG has recently been extended to include Partial Correlation Planar maximally filtered Graphs (PCPG) which use partial correlations between the stocks as opposed to Pearson correlations \cite{Kenett10}, \cite{Mai14}. It would be interesting to see how the use of partial correlations affects the 3- and 4-cliques. We leave this for future work.

\section*{References}

\begin{itemize}
\bibitem [11] {Tumminello05} M. Tumminello, T. Aste, T. Di Matteo, R.N. Mantegna. Proceedings of National Academy of Sciences 102 (2005) 1042.
\bibitem [1] {Newman03} M.E.J Newman. SIAM Review 45 (2003) 167.
\bibitem [2] {Toivonen06} R. Toivonen, J.-P Onnela, J. Saramaki, J. Hyvonen, K. Kaski. Physica A 371 (2006) 851.
\bibitem [3] {Allen00} F. Allen and D. Gale. The Journal of Political Economy 108, 1 (2000) 1.
\bibitem [4] {Bonanno04} G. Bonanno, G. Caldarelli, F. Lillo, S. Micciche, N. Vandewalle, R. N. Mantegna. European Physical Journal B 38 (2004) 363.
\bibitem [5] {Stanley99} H.E. Stanley, R.N. Mantegna. Introduction to Econophysics: Correlations and Complexity in Finance (1999).
\bibitem [6] {Mantegna99} R.N. Mantegna, European Physical Journal B 11 (1999) 193.
\bibitem [7] {Onnela02} J.-P. Onnela,  A. Chakraborti, K. Kaski, J. Kertesz. European Physical Journal B 30 (2002) 285.
\bibitem [8] {Onnela03} J.-P. Onnela, A. Chakraborti, K. Kaski, J. Kertesz, A. Kanto. Physic Scripta T106 (2003) 48.
\bibitem [9] {Tse10} C.K. Tse, J. Liu, F. C. M. Lau. Journal of Empirical Finance 17 (2010) 659.
\bibitem [10] {Qiu10} T. Qiu, B. Zheng, G. Chen. New Journal of Physics 12 (2010).
\bibitem [12] {Hopcroft74} J. Hopcroft and R. Tarjan. Journal of the Association for Computing Machinery 2  4 (1974) 549.
\bibitem [13] {Huang09} W-Q. Huang, X-T. Zhuang, S. Yao. Physica A 388 (2009) 2956.
\bibitem [14] {Pozzi13} F. Pozzi, T. Di Matteo, T. Aste. Scientific Reports 3 (2013) 1665.
\bibitem [15] {Eryigit09} M. Eryigit and R. Eryigit. Physica A 388 (2009) 3551.
\bibitem [16] {Song12} W-M. Song, T. Di Matteo, T. Aste. PLoS ONE 7(3): e31929 (2012).
\bibitem [17] {Aste05a} T. Aste, T. Di Matteo and S.T. Hyde. Physica A 346 (1-2 Spec. Iss.) (2005).
\bibitem [18] {Aste05b} T. Aste and T. Di Matteo. Proc. SPIE 5848, Noise and Fluctuations in Econophysics and Finance, 100 (2005).
\bibitem [19] {Tumminello07} M. Tumminello, T. Di Matteo, T. Aste, R.N. Mantegna. European Physical Journal B 55 (2007) 209.
\bibitem [20] {Coronnello05} C. Coronnello, M. Tumminello, F. Lillo, S. Micciche and  R.N. Mantegna. Acta Phys. Pol. B 36 (2005) 2653.
\bibitem [21] {Negami94} S. Negami. Discrete Mathematics 135 (1994) 225.
\bibitem  [22] {Wilson85} R. J. Wilson. Introduction to Graph Theory. Longman Group Ltd. (1985).
\bibitem  [23] {Wagner36} K. Wagner. Journal der Deutschen Mathematiker-Vereinigung 46 (1936) 26.
\bibitem [24] {Bose12} P. Bose, S. Verdonschot. Computational Geometry (2012) 29.
\bibitem [25] {Ore67} O. Ore. The Four-Color Problem. Academic Press (1967) 9.
\bibitem [26] {Kenett10} D.Y. Kenett, M. Tumminello, A. Madi, G. Gur-Gershgoren, R.N. Mantegna, E. Ben Jacob. Public Library of Science 5, 12 (2010) e15032.
\bibitem [27] {Mai14} Y. Mai, H. Chen, L. Meng. Physica A 396 (2014) 235.
\end{itemize}

All figures created using FNA.fi.

\end{document}